\newcommand\vldbdoi{10.14778/3467861.3467872}
\newcommand\vldbpages{XXX-XXX}
\newcommand\vldbvolume{14}
\newcommand\vldbissue{10}
\newcommand\vldbyear{2021}
\newcommand\vldbauthors{\authors}
\newcommand\vldbtitle{\shorttitle} 
\newcommand\vldbavailabilityurl{https://github.com/AwesomeYifan/Data-acquisition-for-ML}
\newcommand\vldbpagestyle{empty}
\begin{document}
\title{Data Acquisition for Improving Machine Learning Models}


\author{Yifan Li}
\affiliation{%
  \institution{York University}
}
\email{yifanli@eecs.yorku.ca}

\author{Xiaohui Yu}
\affiliation{%
  \institution{York University}
}
\email{xhyu@yorku.ca}

\author{Nick Koudas}
\affiliation{%
  \institution{University of Toronto}
}
\email{koudas@cs.toronto.edu}

\begin{abstract}

The vast advances in Machine Learning (ML) over the last ten years have been powered by the availability of suitably prepared data for training purposes. The future of ML-enabled enterprise hinges on data. As such, there is already a vibrant market offering data annotation services to tailor sophisticated ML models.

In this paper, inspired by the recent vision of online data markets and associated market designs, we present research on the practical problem of obtaining data in order to improve the accuracy of ML models. We consider an environment in which consumers query for data to enhance the accuracy of their models and data providers who possess data make them available for training purposes. We first formalize this interaction process laying out the suitable framework and associated parameters for data exchange. We then propose two data acquisition strategies that consider a trade-off between {\em exploration} during which we obtain data to learn about the distribution of a provider's data and {\em exploitation} during which we optimize our data inquiries utilizing the gained knowledge. In the first strategy, {\em Estimation and Allocation} (EA), we utilize queries to estimate the utilities of various predicates while learning about the distribution of the provider's data; then we proceed to the allocation stage in which we utilize those learned utility estimates to inform our data acquisition decisions. The second algorithmic proposal, named {\em Sequential Predicate Selection} (SPS), utilizes a sampling strategy to explore the distribution of the provider's data, adaptively investing more resources to parts of the data space that are statistically more promising to improve overall model accuracy.

We present a detailed experimental evaluation of our proposals utilizing a variety of ML models and associated real data sets exploring all applicable parameters of interest. Our results demonstrate the relative benefits of the proposed algorithms. Depending on the models trained and the associated learning tasks we identify trade-offs and highlight the relative benefits of each algorithm to further optimize model accuracy. 

\end{abstract}

\maketitle

\pagestyle{\vldbpagestyle}
\begingroup\small\noindent\raggedright\textbf{PVLDB Reference Format:}\\
\vldbauthors. \vldbtitle. PVLDB, \vldbvolume(\vldbissue): \vldbpages, \vldbyear.\\
\href{https://doi.org/\vldbdoi}{doi:\vldbdoi}
\endgroup
\begingroup
\renewcommand\thefootnote{}\footnote{\noindent
This work is licensed under the Creative Commons BY-NC-ND 4.0 International License. Visit \url{https://creativecommons.org/licenses/by-nc-nd/4.0/} to view a copy of this license. For any use beyond those covered by this license, obtain permission by emailing \href{mailto:info@vldb.org}{info@vldb.org}. Copyright is held by the owner/author(s). Publication rights licensed to the VLDB Endowment. \\
\raggedright Proceedings of the VLDB Endowment, Vol. \vldbvolume, No. \vldbissue\ %
ISSN 2150-8097. \\
\href{https://doi.org/\vldbdoi}{doi:\vldbdoi} \\
}\addtocounter{footnote}{-1}\endgroup

\ifdefempty{\vldbavailabilityurl}{}{
\vspace{.3cm}
\begingroup\small\noindent\raggedright\textbf{PVLDB Artifact Availability:}\\
The source code, data, and/or other artifacts have been made available at \url{\vldbavailabilityurl}.
\endgroup
}

\section{Introduction}

\textbf{Background and Motivation.} Data traditionally has been an asset in deriving projections or making decisions. The prevalence of Machine Learning (ML) models across business functions necessitates access to ample and diverse data sources for training. As an answer to the vast demand for training data, numerous businesses offer data annotation services \cite{mturk,hive,appen} providing annotated data in a myriad of business categories with varying degrees of specialization. It is evident that the need for specialized data to train ML models has created a corresponding market fulfilling the purpose.

At the same time, in recent years we have experienced the increasing prevalence of {\em online data markets} such as Dawex \cite{dawex}, World-Quant \cite{worldquant}, and Xignite \cite{xignite}, to name a few, which aim to make access to data a commodity,  for modelling or learning purposes. In these markets the main idea is to facilitate interaction between {\em data providers} (e.g., individuals or organizations that possess data in diverse domains and wish to offer them to other interested parties) and {\em data consumers} who are interested to obtain data to accomplish certain tasks, such as training new ML models or increasing the accuracy of existing ones, or conducting statistical estimation. Since such platforms aim to adopt the characteristics of a market, data exchange carries an underlying cost (e.g., monetary value). The emergence of such markets can be viewed as an initial step to the enablement of efficient trading of data.

The design of the operating principles, market mechanisms and tradings strategies (to name a few topics) of such markets constitute open research directions and involve multiple research communities. Recently, in the database community, Fernandez et al. \cite{fernandez2020data} presented their vision for a research agenda on market design in data market platforms and discussed various important research directions of broader data management interest in making the data market vision a reality.

\textbf{The Problem.} In this paper, we consider a domain $\Gamma=\{\mathcal{X},\mathcal{Y}\}$, where $\mathcal{X}$ denotes the feature space and $\mathcal{Y}$ denotes the label space (e.g., possible class labels for classification tasks, possible values of the dependent variable for regression tasks). The purpose is to train a model for a target distribution $p$ over $\Gamma$ such that the model attains high accuracy on data drawn from the same distribution.  We focus on supervised learning and assume that the data consumer (\textit{consumer} for short) already has an ML model (e.g., CNN, SVM, a regression model) built utilizing some training data from $\Gamma$, and wishes to obtain data from a data provider (\textit{provider}) offering $\mathcal{D}_{pool}$ drawn from the target distribution $p$. 
{The aim of the consumer is to maximize  the improvement in the accuracy of the model\footnote{We use the term ``accuracy" in a broad sense here, which depending on the ML model can be measured in different ways (precision for classifiers, root mean squared error for regression models, etc.), and our discussion is independent of its exact choice.}. }

{To facilitate the interaction, the provider exposes meta-data of $\mathcal{D}_{pool}$, such as the range of values in each attribute, and the set of possible labels on the records.} 
We assume a typical query interface supported by both parties, akin to the prevalent application programming interfaces (API) in existence for any online service \cite{twitter}. 
The interface supports a predicate $P$ specifying the properties of the records requested, and an integer $I$ denoting the number of records to obtain (e.g., 10 images with \texttt{label = 'dog'}, or 100 records with \texttt{2018$\le$year$\le$2020}). Such predicates impose multi-attribute conjunctive conditions in the more general case. After receiving the query request from the consumer, the provider randomly selects without replacement $I$ records satisfying $P$  from $\mathcal{D}_{pool}$ and returns these records to the consumer. 

We assume that the consumer carries a {\em budget} $B$ on the total number of records that can be requested\footnote{Such a budget can be determined based on monetary costs per record offered by the provider or by the monetary cost of each query, etc. Any mechanism to assign a value to data (e.g., price or otherwise) is completely orthogonal to our approach.}. The {number of }records  requested by the consumer each time a query is issued to the provider may vary, and is decided by the consumer, as long as the total number of records requested across all queries is within the budget $B$.
For example, if the consumer can obtain 10 images from the provider, these can be obtained by inquiring for 10 images once or  for 5 images twice.  

Suppose that the accuracy of the underlying model is evaluated on a testing data set $\mathcal{D}_{test}\subset\Gamma$. The task of the consumer is to identify a series of queries $\langle(P_1,I_1),(P_2,I_2),\cdots,(P_z,I_z)\rangle$ to obtain $B$ records, where $\sum_{i=1}^z I_i =B$ with $P_i$ and $I_i$ being respectively the predicate and the number of requested records in the $i$-th query. 
Let $\mathcal{D}_{otd}$ be the records  obtained from the provider using the identified queries  ($\vert\mathcal{D}_{otd}\vert=B$), $\mathcal{D}_{init}$ be the data the model of the consumer is trained on initially, and $\mathcal{M}'$ be the model re-trained on all the data the consumer has after data acquisition (i.e., $\mathcal{D}_{otd} \cup~ \mathcal{D}_{init}$ ). The objective of the data acquisition process is to improve as much as possible the accuracy of $\mathcal{M}'$ on $\mathcal{D}_{test}$.

\textbf{Proposed Solutions.} We develop data acquisition strategies to address this problem. In particular, we consider the trade-off between \textit{exploration} and {\em exploitation} in data acquisition.  During exploration, requested data records from the available budget are obtained to gain more knowledge regarding the distribution the provider's data, so that better predicates can be designed for subsequent queries. During exploitation, data records are obtained based on the current information the consumer possesses. With a limited budget of records to be requested, one must strike a balance between exploration and exploitation by allocating the requested records within the budget wisely such that the accuracy of the resulting model is maximized.  


We propose two methods to determine how to allocate the existing budget of records across queries (the budget allocation problem) adopting different strategies. The first solution, which we refer to as {\em estimation-and-allocation} (EA), consists of two stages: during the Estimation Stage, the consumer issues a number of queries obtaining a number of random records for each of them to explore from $\mathcal{D}_{pool}$; we subsequently estimate (without re-training the model) the expected
improvement in model accuracy  utilizing the records for each query, which we refer to as {\em predicate utility}. During the Allocation Stage, the consumer allocates the remaining record budget according to the estimated utilities. We investigate methods to quantify the estimation error and propose an adaptive method to balance between reducing the estimation error and controlling how much of the record budget is devoted to obtaining the estimation; as a result, budget is reserved to be allocated more effectively for predicates with high utilities. 
For the Allocation Stage, we propose different allocation strategies and showcase their performance under various  settings in Section~\ref{sec:exp}. 

The second solution, which we refer to as {\em sequential predicate selection} (SPS), is based on the 
observation that for a predicate $P$, the associated predicate utility decreases as we obtain more records for the predicate, due to information redundancy \cite{guo2007discriminative}. The core idea of SPS is to iteratively pose queries requesting a small number of records while balancing between (1) obtaining more records with predicates yielding higher expected utility, and (2) closely monitoring the utility decrease of each predicate as we obtain more records. We implement this design utilizing Thompson Sampling (TS) \cite{thompson1933likelihood,russo2017tutorial}, an action selection method, for its simplicity and proven performance, but the design can be implemented with other action selection methods (such as the $\epsilon$-greedy algorithm \cite{sutton2018reinforcement}) as well. 

As both EA and SPS rely on the expected predicate utility, we investigate how to best estimate it without re-training the underlying ML model. The utility of a predicate $P$ is essentially measured by the improvement in model accuracy resulting from the set of records selected by $P$,  $\mathcal{R}_P$. We propose {\em novelty}, which describes how different the distribution of $\mathcal{R}_P$ is from the distribution of the records the consumer currently possesses satisfying $P$, as the indicator of the potential accuracy gain $\mathcal{R}_P$ brings to the model. Note that our subsequent discussion applies to other utility measures as well; we experimentally compare various measures in Section \ref{exp:measure}.

We evaluate the performance of EA and SPS on both traditional ML tasks and Deep Learning tasks, including spatial regression, radar data classification, and image classification, using classical ML models as well as state-of-the-art deep models. As will be shown in Section \ref{sec:exp}, the proposed methods demonstrate solid performance across a variety of settings, outperforming alternative approaches that require frequent model re-training. We also thoroughly study the effects of  various parameters on EA and SPS and provide suggestions on their settings in real-world scenarios.

\textbf{Contributions.} Our main contributions can be summarized as follows.
\begin{itemize}
    \item We formally define and study the problem of data acquisition for improving the performance of ML models given a budget. We consider this problem in the context of a data consumer and a data provider in a data market, and it can serve as a building block for a variety of data markets.
    \item We propose an estimation-and-allocation solution, EA, which first  estimates the utility of each predicate with a portion of the budget, and then allocates the budget accordingly to improve the accuracy of the model.
    \item We devise a sequential predicate selection solution, SPS, which adaptively conducts exploration and exploitation, by iteratively requesting a small number of records in each query, aiming to improve the predicate utility estimates and utilize such estimates at the same time. 
    \item We design methods to estimate the expected utility of a given predicate, allowing EA and SPS to proceed without necessitating the re-training of the underlying model. 
    \item We experimentally study the proposed solutions across a variety of settings, including but not limited to, different tasks, ML models, datasets, distributions, budget limitations, utility measures. We showcase that each solution has certain benefits and can be suitably adopted when applied to real-world settings.
\end{itemize}
The rest of this paper is organized as follows.  In Section 2, we review related work. Section 3 introduces terminology and background. In Section 4, we propose the Estimation-and-Allocation method, followed by Section 5 that presents the Sequential Predicate Selection approach. The experimental evaluation is presented in Section 6. Section 7 concludes this paper.
\section{Related Work}
\label{sec:related}

\textbf{Data Markets.} Recently Fernandez et al. \cite{fernandez2020data} presented their vision for the design of platforms to support data markets. Our work is aligned with such a vision addressing a specific problem in this setting. Data markets have been an active research topic in various communities. For example, several works adopt a game-theoretic approach of market design for such markets \cite{DBLP:conf/ec/AgarwalDS19,DBLP:conf/ec/MehtaDJM19,liu2020dealer}, exploring issues such as fairness and pricing. 

\textbf{Data Pricing.} A lot of research has been devoted to the design of pricing mechanisms for data. A recent survey \cite{DBLP:conf/kdd/Pei20} presents an overview of works related to data pricing from a data science perspective. Other works \cite{DBLP:journals/pvldb/ChawlaDKT19,lin2014arbitrage,balazinska2011data,DBLP:conf/sigmod/ChenK019}, present pricing-specific problems for queries and models. Pricing mechanisms for data are an orthogonal problem to our work. Any applicable pricing mechanism can be adopted to determine the number of records one can obtain based on applicable monetary budgets.

\textbf{Data Acquisition.} Another area related to data markets is the acquisition of data. Chen et al. \cite{DBLP:conf/sigecom/ChenILSZ18} consider the problem of obtaining data from multiple data providers in order to improve the accuracy of linear statistical estimators. The focus is strictly on linear statistical techniques and they provide certain types of guarantees on the best strategies to adopt when providers decide to abstain from making their data available and thus data costs vary dynamically. In a related thread Kong et al. \cite{DBLP:conf/aaai/KongSTY20} study the problem of estimating Gaussian parameters and other estimators with Gaussian noise. Zhang et al. \cite{zhang2020optimal} study incentive mechanisms for participants in data markets to refresh their data.

\textbf{Data Augmentation.} Data augmentation and feature selection for machine learning have also attracted research interests recently \cite{DBLP:journals/pvldb/ShahKZ17,kumar2016join,DBLP:journals/pvldb/ChepurkoMZFKK20}. Kumar et al. \cite{kumar2016join} focus on ML model training on relational data and propose methods to decide whether joining certain attributes would improve the model's accuracy. Chepurko et al. \cite{DBLP:journals/pvldb/ChepurkoMZFKK20} design a system that automatically performs feature selections and joins so as to improve the performance of a given model. Their work, however, focuses on the selection and acquisition of (new) features rather than records, and assumes the accessibility of all features, and thus is orthogonal to the problem studied herein.

\textbf{Active Learning.} Active learning is concerned with interactively acquiring labels for new data points to improve the performance of machine learning models \cite{pmlr-v108-shui20a}. It has been applied to solve various problems in data management \cite{lin2020activelearning_sigmod}. In a typical setting for active learning, we have access to the features of new data and have to decide the set of records for which we would like to acquire the label for a cost. In contrast, we deal with the scenario where the consumer does not have any prior knowledge of the provider's data other than the meta-data. Perhaps what is most related to ours in this area is the work by Lomasky et al. \cite{Lomasky2007ECML}, which also aims to control the class of data to acquire for improving the ML models. However, they focus on the task of selecting class proportions for generating new training data. They do not explicitly optimize the use of a fixed budget, do not tackle regression problems, and require frequent re-training of the model.


\textbf{Exploration vs. Exploitation.} The trade-off between exploration and exploitation exists in many scenarios where a decision has to be made with incomplete knowledge. Although methods balancing this trade-off have been proposed in various contexts, such as reinforcement learning \cite{sutton2018reinforcement} and online decision making \cite{ikonomovska2015real}, they are not directly applicable to our problem setting. Nonetheless, we share a similar methodology and adopt one of the popular approaches for performing this trade-off, Thompson Sampling \cite{russo2017tutorial}, as the framework to build the proposed SPS solution. 


 

\section{Preliminaries}
\label{sec:preliminaries}
In this section, we formally define the terminology utilized and the problems we focus on in this paper.



{\bf Data Domain and Learning Task.} We consider a supervised learning task defined on a data domain $\Gamma=\{\mathcal{X},\mathcal{Y}\}$, where $\mathcal{X}$ denotes the {\em feature} space and $\mathcal{Y}$ denotes the {\em label} space (e.g., all possible class labels for classification tasks, all possible values of the dependent variable for regression tasks). Suppose there is a conditional distribution  $p(y|x)$ defined over $\Gamma$, where $x \in \mathcal{X}$, $y \in \mathcal{Y}$. 
The learning task is to train a model $\mathcal{M}$ on a dataset that represents a distribution $g$ that is as close as possible to the target distribution $p$. The accuracy of $\mathcal{M}$ is evaluated on a testing dataset $\mathcal{D}_{test}\subset\Gamma$.  In accordance to any  well-formed learning task, we assume that both the training data and testing data come from the same distribution $p$. The model $\mathcal{M}$ is evaluated using a function $F$ based on $\mathcal{D}_{test}$, denoted as $F(\mathcal{M};\mathcal{D}_{test})$. 

{\bf Provider, Consumer, and Budget.} The provider maintains a collection of data records $\mathcal{D}_{pool}\subset\Gamma$ drawn from the target distribution $p$, which are provided in the data market and are initially entirely invisible to the consumer. 
The consumer has an ML model $\mathcal{M}$ trained on data $\mathcal{D}_{init} \subset \Gamma$ drawn from $p$. Note that although both $\mathcal{D}_{pool}$ and $\mathcal{D}_{init}$ follow the same distribution $p$, they are not necessarily representative samples of $p$. For example, $\mathcal{D}_{pool}$ may contain a high percentage of records from one part of the data domain $\Gamma$, while $\mathcal{D}_{init}$ from another. 
In the degenerate case, $\mathcal{M}$ is simply a model randomly initialized without using any training data, i.e., $\mathcal{D}_{init}=\emptyset$. The consumer has a budget $B$, which is the maximum number of records that the consumer can obtain from the provider.

{\bf Query and Predicate.} A query $Q = (P, I)$ consists of a predicate $P$ that specifies the properties of the records the consumer would like to acquire, and an integer $I$ that specifies the number of records requested from the provider. Let $\mathcal{R}_P\subset\mathcal{D}_{pool}$ denote the set of records that satisfy  $P$ and $\mathcal{R}_Q$ denote the set of records returned by the provider. All possible predicates admissible to the provider constitute set $\mathcal{P}$, which can be formed in various ways depending on the task at hand. 
{In the paper we adopt a simple yet intuitive predicate construction strategy: for a classification problem, $\mathcal{P}$ contains all predicates with a selection on the class label (e.g., $\texttt{label = `dog'}$); for a regression problem, we discretize each attribute into equal-width sub-ranges, and all combinations of sub-ranges, denoted by ``cells'', constitute $\mathcal{P}$ (e.g., \texttt{1000$\le$ salary $\le$ 2000 $\wedge$ 20$\le$ age$\le$ 30}). There are many other strategies to construct and refine $\mathcal{P}$. For example, the consumer may perform cross validation on $\mathcal{D}_{init}$ and use the $m$ labels in which $\mathcal{M}$ has the lowest accuracy to construct $\mathcal{P}$ for more targeted acquisition. One may also inject domain knowledge to the predicate construction process and only use these classes or cells related to the task as predicates. For example, a consumer training a cat/dog classifier may not consider predicate $\texttt{label = `horse'}$.  Refer to Section \ref{exp:settings} and Section \ref{exp:partition} for more details on the methodology to construct $\mathcal{P}$ adopted in this work and the associated evaluation.
We note that the investigation of predicate construction strategies and their properties is an interesting direction for future work. Our emphasis is to develop methods for data acquisition that are independent of the predicate construction process.}

{\bf Interaction.} Each round of interaction between the consumer and the provider consists of two steps: (1) the consumer issues a query $Q=(P,I)$ to the provider, and (2) the provider returns a set of records $\mathcal{R}_Q$, where $\vert\mathcal{R}_Q\vert=I$ and each $r\in\mathcal{R}_Q$ is randomly sampled from $\mathcal{R}_P$ {\footnote{Note that if $I$ is larger than the number of the provider's remaining records (say $I_P$), all of the provider's records will be returned and only $I_P$ will be deducted from the consumer's budget.}}. Without loss of generality, we assume that all records provided to the consumer are unique within the same and across different rounds of interactions. {A predicate $P$ can be reused across different rounds of interactions as long as $\mathcal{R}_P$ has not been exhausted, i.e., there are records in $\mathcal{R}_P$ that have not been acquired by the consumer yet.}  

{{\bf Predicate utility.} 
After each round of interaction, the consumer estimates the utility of the predicate used in the query, which is useful for planning the next round of interaction. The utility of a predicate $P$ expresses the anticipated accuracy improvement that $\mathcal{R}_P$ brings to $\mathcal{M}$. We define a measure that we call {\em novelty} to quantify predicate utility. The basic idea of this measure is to quantify the difference between the data acquired in the interaction to those that the consumer currently possesses. The higher the difference, the more information this interaction brings to the consumer. {Let $\mathcal{R}_{\mathcal{M}:P}$ be the records the consumer currently possesses satisfying $P$. The novelty of predicate $P$, denoted as $U_P$, is defined on $\mathcal{R}_{\mathcal{M}:P}$ and  $\mathcal{R}_P$. More specifically, we consider a binary classification problem that treats {$\mathcal{R}_{\mathcal{M}:P}$ and  $\mathcal{R}_P$ as samples from class 0 and class 1 respectively}, and train a classifier CLF to distinguish between the two sets of records. The utility of $P$ is computed as follows:

\begin{equation}
    \label{eq:clf}
    U_P=\frac{\sum_{(x,y)\in\mathcal{R}_P}\mathbb{I}[\mbox{CLF}((x,y))=1]}{|\mathcal{R}_P|}
\end{equation}
where $\mathbb{I}[*]$ is the indicator function that takes value 1 if statement $*$ is true and 0 otherwise, and $\mbox{CLF}((x,y))$ denotes the prediction for record $(x,y)$ made by CLF. In principle, any classifier may be used as the CLF. However, in practice it is preferred to use light-weight models for faster training and inference as the computation of novelty is carried out frequently. We study the influence of different classifiers in Section \ref{sec:exp-clf}.}

The intuition for the design of novelty is that, if $\mathcal{R}_P$ is drawn from a distribution that is very different than the one $\mathcal{R}_{\mathcal{M}:P}$ is drawn from, then the two sets of records can be easily differentiated and $U_P$ is high, and vice versa. Note that we only evaluate the accuracy of the classifier on $\mathcal{R}_P$, because novelty measures how different $\mathcal{R}_P$ is, given $\mathcal{R}_{\mathcal{M}:P}$, rather than the other way around. Such methods for quantifying the difference between two distributions are well adopted in the ML literature \cite{DBLP:conf/iclr/Lopez-PazO17}. In practice, it may not be feasible for the consumer to obtain all the records in $\mathcal{R}_P$ due to budget limitations. As such, in the proposed solutions, we utilize queries based on the same predicate $P$ returning $|\mathcal{R}_Q| \ll|\mathcal{R}_P|$ records, to estimate $U_P$. 

}

{\bf Acquisition plan.} The acquisition plan of the consumer consists of a sequence of interactions, $\langle(P_1,I_1),(P_2,I_2),\cdots,(P_z,I_z)\rangle$, where $\forall i\in[1,z]$, $P_i\in\mathcal{P}$ and $\sum_{i=1}^z I_i=B$. 
The consumer receives $B$ records in total after executing the acquisition plan, denoted as $\mathcal{D}_{otd}$.

The problem of data acquisition for model improvement is defined as follows.
\begin{definition}\textbf{Data Acquisition for Model Improvement}.\\ Given (1) a set of records,  $\mathcal{D}_{pool}$ from a provider, (2) an initial set of records, $\mathcal{D}_{init}$, possessed by a consumer, (3) the set of possible predicates, $\mathcal{P}$, (4) the initial model, $\mathcal{M}$, of the consumer, (5) a measure to evaluate model accuracy, $F$, and (6) the budget, $B$, the objective of data acquisition for model improvement is to construct an acquisition plan to maximize  $F(M';\mathcal{D}_{test})$, where $M'$ denotes the consumer model $\mathcal{M}$ after being re-trained on $\mathcal{D}_{init}\cup\mathcal{D}_{otd}$.
\end{definition}

We summarize the notations and abbreviations used in the paper in Appendix \ref{sec:notation}.
\section{An Estimation-and-Allocation Solution}
\label{sec:ea}
In this section, we introduce {\em estimation-and-allocation} (EA), a two-stage solution, to generate the acquisition plan. Essentially, stage one of EA is designed to explore, i.e., to gather more information on how useful each predicate is; while stage two is to exploit, i.e., to utilize the knowledge gained in stage one to optimize subsequent actions. Specifically, the first stage, called the {\em Estimation Stage}, aims to obtain accurate estimates on the utilities of predicates in $\mathcal{P}$; this is achieved via querying the provider requesting a number of records that constitute a small portion of the budget. Then in the second stage, called the {\em Allocation Stage}, the consumer allocates the remaining budget and issues queries to the provider based on the estimated predicate utilities.  We first discuss in Section~\ref{sec:bounding_error} how to ensure the quality of the utility estimates in the Estimation Stage, and present a method that could balance between quality and budget consumption (the amount of record budget spent) in Section~\ref{sec:adaptive_estimation}. We then elaborate on the Allocation Stage in Section~\ref{sec:allocation}. 

\subsection{Estimating Predicate Utility}
\label{sec:bounding_error}
We now discuss how to estimate the predicate utilities. Recall that the utility $U_P$ of a predicate $P$ is defined as the accuracy of the classifier (denoted by CLF) in  differentiating $\mathcal{R}_{P}$ from $\mathcal{R}_{\mathcal{M}:P}$. Since it is not possible to obtain the entire $\mathcal{R}_P$, we rely on queries using predicate $P$ to effectively sample from it. 
We can estimate $U_P$ based on the records already acquired with $P$, say $\hat{\mathcal{R}}_P$, and we use $\hat{U}_P$ to denote the estimated value of $U_P$. 

The effectiveness of EA depends largely on the accuracy of the predicate utility estimates; thus we investigate how to statistically bound the estimation error, i.e., the difference between $U_P$ and $\hat{U}_P$. To this end, we aim to find the $\epsilon$-$\delta$ approximations of all predicate utilities, defined as follows.

\begin{definition}\textbf{$\epsilon$-$\delta$ Approximation}. $\hat{U}_P$ is said to be an $\epsilon$-$\delta$ approximation of $U_P$ if $\mbox{Pr}(\vert U_P-\hat{U}_P\vert\ge \epsilon)\le\delta$, where $\epsilon$ denotes the error bound and $\delta$ denotes the significance level.
\end{definition}

In order to determine whether $\hat{U}_P$ is an $\epsilon$-$\delta$ approximation of $U_P$, we conduct a statistical test with the following null hypothesis:

\begin{equation}
    \label{eq:null}
    H_0^{(P)}: \vert U_P-\hat{U}_P\vert\ge \epsilon
\end{equation}

where $P\in\mathcal{P}$ is an arbitrary predicate. We reject $H_0^{(P)}$ at significance level $\delta$ when the following condition is met:

\begin{equation}
    \label{eq:reject}
    \mbox{reject}\ H_0^{(P)}\ \mbox{if}\ \mbox{Pr}(\vert U_P-\hat{U}_P\vert\ge \epsilon)\le\delta
\end{equation}

The rejection condition bounds the probability of type I error (false rejection) of the statistical test by $\delta$, and if $H_0^{(P)}$ can be rejected, clearly $\hat{U}_P$ is an $\epsilon$-$\delta$ approximation of $U_P$. 

We next discuss how to compute $\mbox{Pr}(\vert U_P-\hat{U}_P\vert\ge \epsilon)$. {Note that there are two sources of error in estimating $U_P$: (1) approximating $U_P$ with a subset $\hat{\mathcal{R}}_P\subset{\mathcal{R}}_P$, and (2) the error incurred by the CLF. In our work, we focus on the error caused by insufficient records (i.e., source (1)), and we bound the model error (i.e., source (2)) following \cite{dutt2020efficiently}. Nonetheless,  both types of error can be reduced by increasing the size of $\hat{\mathcal{R}}_P$ \cite{dutt2020efficiently}. To bound the error attributed to insufficient records, we present the following result on the distribution of $U_P-\hat{U}_P$.}
\begin{theorem}
\label{theorem:normal}
$U_P-\hat{U}_P\sim \mathcal{N}(0,\frac{U_P(1-U_P)}{\vert\hat{\mathcal{R}}_P\vert})$, where $\mathcal{N}(0,\frac{U_P(1-U_P)}{\vert\hat{\mathcal{R}_P}\vert})$ denotes the normal distribution with mean $0$ and variance $\frac{U_P(1-U_P)}{\vert\hat{\mathcal{R}_P}\vert}$.
\end{theorem}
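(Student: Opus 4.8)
The plan is to recognize $\hat U_P$ as a sample mean of i.i.d.\ Bernoulli indicators and then invoke the Central Limit Theorem. Recall from Equation~\eqref{eq:clf} that $U_P$ is the expected value of $\mathbb{I}[\mbox{CLF}((x,y))=1]$ when $(x,y)$ ranges uniformly over $\mathcal{R}_P$, while $\hat U_P$ is the same average computed over the acquired subset $\hat{\mathcal{R}}_P$. First I would argue that, because the provider returns records by sampling uniformly (and, across rounds, without replacement) from $\mathcal{R}_P$, each record in $\hat{\mathcal{R}}_P$ can be treated as an independent draw from the population distribution on $\mathcal{R}_P$ — here one uses $|\hat{\mathcal{R}}_P|\ll|\mathcal{R}_P|$ so that the finite-population correction is negligible and sampling without replacement is well approximated by sampling with replacement. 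Then the random variables $Z_i := \mathbb{I}[\mbox{CLF}((x_i,y_i))=1]$ for the records $(x_i,y_i)\in\hat{\mathcal{R}}_P$ are i.i.d.\ $\mathrm{Bernoulli}(U_P)$, with mean $U_P$ and variance $U_P(1-U_P)$.

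Next, write $\hat U_P = \frac{1}{|\hat{\mathcal{R}}_P|}\sum_{i=1}^{|\hat{\mathcal{R}}_P|} Z_i$, so that $\mathbb{E}[\hat U_P]=U_P$ and $\mathrm{Var}(\hat U_P)=\frac{U_P(1-U_P)}{|\hat{\mathcal{R}}_P|}$ by independence. Hence $U_P-\hat U_P$ has mean $0$ and variance $\frac{U_P(1-U_P)}{|\hat{\mathcal{R}}_P|}$. To get the normal distribution I would apply the Central Limit Theorem (or equivalently the normal approximation to the binomial, i.e.\ the de Moivre–Laplace theorem): for $|\hat{\mathcal{R}}_P|$ sufficiently large, the standardized sum $\frac{\sqrt{|\hat{\mathcal{R}}_P|}\,(\hat U_P - U_P)}{\sqrt{U_P(1-U_P)}}$ is approximately standard normal, which rearranges to the claimed statement $U_P-\hat U_P \sim \mathcal{N}\!\bigl(0,\frac{U_P(1-U_P)}{|\hat{\mathcal{R}}_P|}\bigr)$.

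The main obstacle — and the place where the statement is really an approximation rather than an exact identity — is justifying that the records in $\hat{\mathcal{R}}_P$ behave as i.i.d.\ Bernoulli trials. There are two subtleties: (i) sampling is done \emph{without} replacement from a finite pool, so the $Z_i$ are exchangeable but not strictly independent; and (ii) the label predicted by CLF depends on the trained classifier, which is itself fixed once CLF is trained, so conditional on CLF the indicators are genuine i.i.d.\ draws, but one must be careful that CLF is not re-trained on $\hat{\mathcal{R}}_P$ in a way that introduces dependence. I would handle (i) by the $|\hat{\mathcal{R}}_P|\ll|\mathcal{R}_P|$ assumption (the hypergeometric distribution converges to the binomial), and (ii) by treating CLF as fixed when evaluating $U_P$ and $\hat U_P$, consistent with how novelty is defined in the preliminaries. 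With these two points addressed, the CLT gives the normal approximation and the variance computation is routine; I would then note that this is exactly the regime in which the subsequent $\epsilon$-$\delta$ hypothesis test of Equations~\eqref{eq:null}–\eqref{eq:reject} is carried out.
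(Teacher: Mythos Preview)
Your proposal is correct and follows essentially the same route as the paper: identify the per-record indicators as Bernoulli$(U_P)$ variables, observe that their sum is Binomial, and apply the Central Limit Theorem to obtain the normal approximation with the stated mean and variance. If anything, you are more careful than the paper about the i.i.d.\ assumption (the paper simply asserts independence without discussing sampling without replacement or the role of a fixed CLF), so your treatment of points (i) and (ii) is a welcome addition rather than a deviation.
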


\begin{proof}
Since $U_P$ is the accuracy of the binary classifier CLF in discriminating $\mathcal{R}_P$ from $\mathcal{R}_{\mathcal{M}:P}$, for each record $(x_i,y_i)\in\mathcal{R}_P$, either CLF($(x_i,y_i)$)$=0$ or CLF($(x_i,y_i)$)$=1$, corresponding to the case when $(x_i,y_i)$ is regarded by CLF to be from $\mathcal{R}_P$ or $\mathcal{R}_{\mathcal{M}:P}$, respectively. Also, if we let  $r_i=\mathbb{I}[\mbox{CLF}((x_i,y_i))=0]$, $r_i$ can be viewed as an independent Bernoulli($p$) variable, with $p$ being the probability of $r_i=1$. Evidently, in this case, $p=U_P$.

As $\hat{U}_P=\frac{1}{\vert\hat{\mathcal{R}}_P\vert}\sum_{(x_i,y_i)\in\hat{\mathcal{R}}_P}r_i$ and $r_i\sim\mbox{Bernoulli}(U_P)$, we know $\hat{U}_P*\vert\hat{\mathcal{R}}_P\vert\sim\mbox{Binomial}(\vert\hat{\mathcal{R}}_P\vert, U_P)$. According to the Central Limit Theorem, we have $U_P-\hat{U}_P\sim \mathcal{N}(0,\frac{U_P(1-U_P)}{\vert\hat{\mathcal{R}}_P\vert})$.
\end{proof}

Since $U_P(1-U_P)$ is not known, we cannot directly estimate the difference between $U_P$ and $\hat{U}_P$ based on Theorem \ref{theorem:normal}. Following the standard practice \cite{ash2000probability} in estimating population mean ($U_P$ in our case), we introduce statistic $t_P$ as follows:

\begin{equation}
    \label{eq:tp}
    t_P=\frac{U_P-\hat{U}_P}{S_P/\sqrt{\vert\hat{\mathcal{R}}_P\vert}}
\end{equation}

where $S_P=\sqrt{\hat{U}_P(1-\hat{U}_P)}$ denotes the sample standard deviation. Clearly $t_P$ follows $t$-distribution with degree of freedom ($\vert\hat{\mathcal{R}}_P\vert-1$). 

Now we can re-write $U_P-\hat{U}_P$ as follows:

\begin{equation}
    U_P-\hat{U}_P={t_P*S_P}/{\sqrt{\vert\hat{\mathcal{R}}_P\vert}}
\end{equation}

The probability of $\vert U_P-\hat{U}_P\vert\ge \epsilon$ can now be computed as follows:

\begin{equation}
\begin{split}
    \mbox{Pr}(\vert U_P-\hat{U}_P\vert\ge \epsilon)&=\mbox{Pr}\left(\vert{t_P*S_P}/{\sqrt{\vert\hat{\mathcal{R}}_P\vert}}\vert\ge\epsilon\right)\\
    &=\mbox{Pr}\left(\vert t_P\vert\ge{\epsilon\sqrt{\vert\hat{\mathcal{R}}_P\vert}}/{S_P}\right)\\
    &\le\int_{-\infty}^{-\frac{\epsilon\sqrt{\vert\hat{\mathcal{R}}_P\vert}}{S_P}}f_{n_P}(t_P)dt_P + \int_{\frac{\epsilon\sqrt{\vert\hat{\mathcal{R}}_P\vert}}{S_P}}^{\infty}f_{n_P}(t_P)dt_P\\
    &=Z_P
\end{split}
\label{eq:zr}
\end{equation}

where $n_P=\vert\hat{\mathcal{R}}_P\vert-1$, and $f_{n_P}$ is the probability density function of the $t$-distribution with degree of freedom $n_P$.

Therefore, $\mbox{Pr}(\vert U_P-\hat{U}_P\vert\ge \epsilon)=Z_P$, and $H_0^{(P)}$ can be rejected if $Z_P\le\delta$. Evidently $Z_P$ is negatively correlated to $\vert\hat{\mathcal{R}}_P\vert$, and thus if $H_0^{(P)}$ cannot be rejected, one can obtain more records utilizing predicate $P$ and issuing additional queries to reduce the value of $Z_P$ until $Z_P\le\delta$. The intuition behind this process is that, the more records the consumer obtains utilizing $P$, the more accurate the estimate $\hat{U}_P$ is, and the smaller $\vert U_P-\hat{U}_P\vert$ would be. When the null hypotheses for all predicates in $\mathcal{P}$ can be rejected, the current predicate utility estimates are $\epsilon$-$\delta$ approximations.

\subsection{Budget-Aware Utility Estimation}
\label{sec:adaptive_estimation}
The Estimation Stage has to consider two conflicting goals: providing more accurate estimation of predicate utilities and controlling the budget consumption so that there is more budget left to spend in the Allocation Stage. We thus introduce a budget-aware estimation method, which first acquires a small number of records using each predicate in $\mathcal{P}$ and computes utility estimates accordingly, and then iteratively determines for each predicate whether obtaining more records to improve the estimation accuracy is worthwhile based on a measure called {\em heuristic reward}. Such a measure is designed to strike a balance between estimation accuracy and budget consumption.

Since for a given significance level $\delta$, the estimation accuracy is contingent on $\epsilon$, we adaptively choose and adjust its value in order to yield estimates with different levels of accuracy. Intuitively, to achieve higher estimation accuracy, i.e., smaller $\epsilon$, the consumer needs to acquire more records for estimation. Assume that the consumer has acquired records $\hat{\mathcal{R}}^0_P$ for each predicate $P\in\mathcal{P}$. We use $B'=B-\sum_{P\in\mathcal{P}}\vert\hat{\mathcal{R}}^0_P\vert$ to denote the remaining budget of the consumer. Let $\epsilon_0$ be the minimal $\epsilon$ that causes the null hypotheses for all predicates in $\mathcal{P}$ to be rejected (note that such $\epsilon_0$ always exists, with the extreme case being $\epsilon_0=1$). The heuristic reward is defined as $B'\cdot(1-\epsilon_0)$, which is larger if (1) $B'$ is large, meaning the consumer has more available budget, and (2) $\epsilon_0$ is small, meaning that the estimations are accurate.

\begin{example}
\label{example:ea-init}
Consider a consumer with budget=500 and there are 5 predicates to choose from. Assume that the consumer has acquired 5 records for each predicate, and the resulting sample standard deviations are $[0.1,0.11,0.12,0.13,0.14]$ respectively. Let the significance level $\delta$ be 0.01. Using Equation (\ref{eq:zr}), we know the minimal values of $\epsilon$ causing all $H_0^{(P)}$ ($P\in\mathcal{P}$) to be rejected are $[0.21,0.23,0.25,0.27,0.29]$, and thus the $\epsilon$ that causes all null hypotheses to be rejected, or $\epsilon_0$, is 0.29. Since the remaining budget is 475, the heuristic reward is thus $475\cdot(1-0.29)=337.25$.
\end{example}

Now assume the consumer aims to determine whether reducing $\epsilon_0$ to $\epsilon_b$, by acquiring more records, would improve the heuristic reward. In order to compute the heuristic reward corresponding to $\epsilon_b$, we need to estimate how many additional records need to be acquired, $\Delta B_b$, to reach $\epsilon_b$. We next show how to estimate the value of $\Delta B_b$.

Recall from Equation (\ref{eq:zr}) that the reject condition of $H_0^{(P)}$ is $Z_P\le\delta$. Since $\vert\hat{\mathcal{R}}_P\vert$ is negatively correlated to $Z_P$, in order to get the minimal number of records to acquire to reach a given $\epsilon$, we use the maximal $Z_P$, i.e., $Z_P=\delta$, and rewrite Equation (\ref{eq:zr}) as follows:

\begin{equation*}
\begin{split}
    Z_P=\delta\Leftrightarrow&\int_{-\infty}^{-\frac{\epsilon\sqrt{\vert\hat{\mathcal{R}}_P\vert}}{S_P}}f_{n_P}(t_P)dt_P + \int_{\frac{\epsilon\sqrt{\vert\hat{\mathcal{R}}_P\vert}}{S_P}}^{\infty}f_{n_P}(t_P)dt_P=\delta\\
    \Leftrightarrow&\int_{-\infty}^{-\frac{\epsilon\sqrt{\vert\hat{\mathcal{R}}_P\vert}}{S_P}}f_{n_P}(t_P)dt_P=\frac{\delta}{2}
\end{split}
\label{eq:zp-rewrite}
\end{equation*}

Let $A_P=\frac{\epsilon\sqrt{\vert\hat{\mathcal{R}}_P\vert}}{S_P}$, we can further rewrite the equation above as follows: 

\begin{equation}
    \label{eq:ap}
    \int_{-\infty}^{-A_P}f_{n_P}(t_P)dt_P=\frac{\delta}{2}\Leftrightarrow A_P=-\mbox{PCT}_{n_P}(\frac{\delta}{2})
\end{equation}

where PCT$_{n_P}$ denotes the percentile function of $t$-distribution with degree of freedom $n_P$.

Now having a way to determine the value of $A_P$ using Equation (\ref{eq:ap}), {we rewrite $A_P=\frac{\epsilon\sqrt{\vert\hat{\mathcal{R}}_P\vert}}{S_P}$ as follows:} 

\begin{equation}
\label{eq:relation}
    \vert\hat{\mathcal{R}}_P\vert = \left(\frac{A_P\cdot S_P}{\epsilon}\right)^2
\end{equation}

Equation (\ref{eq:relation}) establishes the relation between the number of records currently obtained and the error bound $\epsilon$ that can be obtained using those records.

Let $\hat{\mathcal{R}}_P^b$ be the records with which we can reach error bound $\epsilon_b$, we have:

\begin{equation}
\label{eq:eps_b}
    \vert\hat{\mathcal{R}}_P^b\vert = \left(\frac{A_P^b\cdot S_P^b}{\epsilon_b}\right)^2
\end{equation}

where $S_P^b$ denotes the sample standard deviation of $\hat{\mathcal{R}}_P^b$, and $A_P^b=-\mbox{PCT}_{n_P^b}(\frac{\delta}{2})$, $n_P^b=\vert\hat{\mathcal{R}}_P^b\vert-1$. 
Notice that $S_P$ asymptotically converges to the population standard deviation, and $A_P^b$, which is determined by PCT$_{n_P}$, asymptotically converges to the opposite value of the $\frac{\delta}{2}$-percentile of standard normal distribution \cite{fisher1992statistical}. Thus, although $S_P^b$ and $A_P^b$ are unknown, we choose to approximate their values by $S_P^0$ and $A_P^0$, i.e., the values computed based on $\hat{\mathcal{R}}_P^0$, as long as $|\hat{\mathcal{R}}_P^0|$ is reasonably large. We demonstrate the validity of this approximation empirically as well in Section \ref{exp:l}.
Therefore, we use  $\left(\frac{A_P^0\cdot S_P^0}{\epsilon_b}\right)^2$ as the least number of records required to achieve estimation error $\epsilon_b$ for predicate $P$.
\begin{example}
\label{example:delta-b}
Following Example \ref{example:ea-init}, we assume that the consumer plans to reduce the error bound to 0.15 from 0.29. The values of $A_P^0$ for each $P$ computed with Equation \ref{eq:ap} are: $[4.68,4.66,4.64,4.63,4.62]$. Thus the number of records required for each predicate to reach error bound 0.15 are: $[10,12,14,17,19]$.
\end{example}


 Since we already posses $\vert\hat{\mathcal{R}}_P^0\vert$ records satisfying $P$, we need to acquire $\left(\frac{A_P^0\cdot S_P^0}{\epsilon^b}\right)^2-\vert\hat{\mathcal{R}}_P^0\vert$ more records utilizing predicate $P$. The total additional records from our budget needed to reach $\epsilon_b$ from $\epsilon_0$ is thus
 
\begin{equation}
\label{eq:mb}
    \Delta B_b=\sum_{P\in\mathcal{P}}\left[\left(\frac{A_P^0\cdot S_P^0}{\epsilon_b}\right)^2-\vert\hat{\mathcal{R}}_P^0\vert\right].
\end{equation}

The new heuristic reward after obtaining these $\Delta B_b$ records can thus be estimated as $(B'-\Delta B_b)\cdot(1-\epsilon_b)$. 
Let $\epsilon_*=\arg\max_{\epsilon_b}(B'-\Delta B_b)\cdot(1-\epsilon_b)$, be the value $\epsilon_b$ yielding the maximal heuristic reward. The consumer then compares $B'\cdot(1-{\epsilon_0})$ with $(B'-\Delta B_{\epsilon_{*}})\cdot(1-\epsilon_*)$. If the latter value is higher, meaning that the new combination of the estimation error $\epsilon_*$ and remaining budget ($B'-\Delta B_*$) is better, we initiate a new interaction to obtain $\Delta B_*$ more records according to Equation (\ref{eq:mb}). This process continues until the above calculation indicates no more improvement in heuristic reward is possible via further interaction; we then terminate the Estimation Stage and enter the Allocation Stage.
\subsection{Budget Allocation}
\label{sec:allocation}
The Allocation Stage of EA is concerned with distributing the remaining budget across all predicates in $\mathcal{P}$ utilizing their estimated utilities. We consider two budget allocation strategies, where $M_P$ denotes the budget (number of records) allocated to a predicate $P$:
\begin{itemize}
    \item {\bf Linear Allocation}: $M_P=\frac{B*\hat{U}_P}{\sum_{P'\in\mathcal{P}}\hat{U}_P'}-B_P$, where $B_P$ denotes the number of records obtained with $P$ during the Estimation Stage.
    \item {\bf Square-root Allocation}:  $M_P=\frac{B*\sqrt{\hat{U}_P}}{\sum_{P'\in\mathcal{P}}\sqrt{\hat{U}_P'}}-B_P$.
\end{itemize}


We sort all predicates in descending order of their utilities and sequentially obtain records based on $M_P$ starting from the predicate with the highest estimated utility. This stage continues iteratively until the budget is exhausted. We demonstrate in Section \ref{exp:allocation} that each allocation strategy has its own winning cases and therefore the allocation strategy can be selected based on the specific task. 
\subsection{The EA Algorithm}
\label{sec:ea_alg}
The  EA solution is summarized in Algorithm \ref{alg:ea}. We first use $l\%$ ($l$ is configurable) of the budget to acquire records using each predicate to start the estimation (Line 1). We calculate the current estimation quality $\epsilon_0$ (Line 4) and the estimation quality that is expected to yield the highest heuristic reward $\epsilon_*$ (Line 5). The current heuristic reward is compared with the expected highest heuristic reward (Line 7), and if the former is higher, we terminate the Estimation Stage (Lines 7-8) and enter the Allocation Stage (Lines 12-13); otherwise we obtain more records based on $\epsilon_*$ (Lines 10-11) and repeat the process. Note that the Estimation Stage also terminates when the budget is exhausted. However although Line 3 provides an exit to the estimation stage, corresponding to the case when all budget is consumed during estimation, such case will never happen as exhausting all budget provides a heuristic reward of 0 and thus is prohibited by Line 7.

Since during the estimation stage more records can be acquired, it is suggested to initialize Algorithm \ref{alg:ea} with a small value of $l$, as initialization with a large value of $l$ may consume too much budget. However, if $l$ is too small (say only 1 record for each predicate), the sample standard deviation $S_P$ computed may accidentally be zero and as a result, $Z_P$ is zero too (Equation (\ref{eq:zr})). Consequently, $H_o^{(P)}$ can be rejected with any $\epsilon$ (even zero) because $Z_P\le\delta$ is always true, and the estimation stage terminates immediately and abnormally as $\epsilon$ cannot be further reduced. With these trade-offs in mind, we experimentally study the influence of the choice of $l$ in Section \ref{exp:l}.
\begin{algorithm}[htb]
\caption{Estimation-and-Allocation}
\label{alg:ea}
\begin{algorithmic}[1]
\Require
budget $B$, all predicates $\mathcal{P}$
\State \textbf{Initialization:} for each $P\in\mathcal{P}$, acquire $l\%$ random records in $\mathcal{R}_P$; assume remaining budget is $B'$.
\State //Estimation Stage
\While {$B'>0$}
\State $\epsilon_0\leftarrow$ minimal $\epsilon$ to cause the hypotheses to be rejected;
\State $\epsilon_*\leftarrow\arg\max_{\epsilon_b,\epsilon_b<\epsilon_0}(B'-\Delta B_{\epsilon_b})\cdot(1-\epsilon_b)$;
\State //$\Delta B_{\epsilon_b}$ is computed with Equation (\ref{eq:mb})
\If {$B'\cdot(1-\epsilon_0)\ge(B'-\Delta B_{\epsilon_*})\cdot(1-\epsilon_*)$}
\State break;
\Else
\State acquire $\Delta B_{\epsilon_b}'$ more records according to Equation (\ref{eq:mb});
\State $B'=B'-\Delta B_{\epsilon_b}$;
\EndIf
\EndWhile
\State //Allocation Stage
\State Allocate the remaining budget using strategies in Section \ref{sec:allocation};
\end{algorithmic}
\end{algorithm}
\newcommand{\pr}{\textrm{Pr}}
\section{A Sequential Predicate Selection Solution}
\label{sec:sps}
In this section, we adopt a Bayesian probabilistic approach and introduce an alternative solution called {\em Sequential Predicate Selection} (SPS). While EA employs two separate stages for exploration and exploitation, SPS utilizes many rounds of interactions, acquiring a small number of records in each interaction with varying predicates, achieving both exploration and exploitation in the same round. In particular, in each round, SPS balances between two objectives: (1) acquiring more records using predicates that are expected to provide higher accuracy improvement to the model, based on previous observations; and (2) exploring to identify other predicates that may bring even higher accuracy improvement to model accuracy. We implement the design utilizing Thompson Sampling (TS), an action selection method with proven performance \cite{russo2017tutorial,ikonomovska2015real}.


\subsection{Framework of Thompson Sampling}

{Thompson Sampling \cite{russo2017tutorial} proceeds as follows. Given an action space $\mathcal{A}$, an agent conducts actions $a_1,a_2,\cdots$, each selected from $\mathcal{A}$, in rounds. After applying $a_t$ {in round $t$}, the agent observes a reward $r_t$, which is randomly generated according to a conditional probability measure $q_\theta(\cdot|a_t)$. The agent is initially uncertain about the value of {parameters} $\theta$ and thus uses a prior distribution $p$ to describe $\theta$, which is iteratively updated based on ($a_t,r_t$) pairs. The target is to maximize the cumulative rewards {over a given number of rounds}. Given $\mathcal{A}$, $p$, and $q$, TS repeats the following steps for action selection:
\begin{enumerate}
    \item Sample $\hat{\theta}\sim p$, i.e., {sample the parameters} controlling the reward according to $p$ ($p$ is called the prior distribution of $\theta$ in this round).
    \item Let $a_t=\arg\max_{a\in\mathcal{A}}\mathbb{E}_{q_{\hat{\theta}}}[r_t|a_t=a]$, i.e., $a_t$ is the action {with the maximum  expected reward under parameter values $\hat{\theta}$}. Perform action $a_t$ and observe $r_t$.
    \item Update $p=\mathbb{P}_{p,q}(\theta\in\cdot|a_t,r_t)$, i.e., $p$ becomes the posterior distribution of $\theta$ given ($a_t,r_t$).
\end{enumerate}

In the following, we apply TS to the problem of predicate selection, discuss the choices of $p$ and $q$ for the problem, and develop the update method of $p$ from prior distribution to posterior distribution. In addition, as will be shown in Section \ref{sec:nonstationary}, the rewards of predicates evolve over time in our problem, and thus we also design methods to handle changes in $\theta$.
}
\subsection{Sequential Predicate Selection Using Thompson Sampling}
\label{sec:spsts}
In SPS, we repeatedly issue queries to the provider in multiple rounds of interactions, until the budget $B$ is exhausted. In each round, a query $Q=(P,I)$ is issued and we calculate a value called {\em query reward} for the records received, deciding on the next query to generate based on the knowledge acquired so far. The objective of the consumer is to maximize the cumulative reward for the queries issued in all rounds of interactions. In what follows, we detail the set of queries that the consumer can issue, how query reward is evaluated, and how knowledge regarding the problem space (queries and the resulting rewards) is updated. 

The number of possible queries the consumer may ask is $\vert\mathcal{P}\vert\cdot B$, as each pair of $P\in\mathcal{P}$ and $I\in[1, B]$ can form a query. We assume that a fixed $I$, denoted by $I_\Delta$, is used for each query. Thus, choosing a query boils down to selecting a predicate in $\mathcal{P}$, and in the sequel we use the term {\em predicate reward} of $P$ and the query reward of $Q=(P,I_\Delta)$ interchangeably. We empirically study the influence of $I_\Delta$ in Section \ref{exp:batch-size}.

{Let $\mathcal{R}_P^{I_\Delta}$ be the records returned by query $Q=(P,I_\Delta)$. The computation of query reward follows the spirit of predicate utility, i.e., novelty introduced in Section \ref{sec:preliminaries}, except for that instead of using the CLF to differentiate $\mathcal{R}_P$ from $\mathcal{R}_{\mathcal{M}:P}$, the query reward differentiates $\mathcal{R}_P^{I_\Delta}$ from $\mathcal{R}_{\mathcal{M}:P}$, directly measuring the anticipated accuracy improvement $\mathcal{R}_P^{I_\Delta}$ brings to model $\mathcal{M}$.} More specifically, the reward of $Q$ is the number of records in $\mathcal{R}_P^{I_\Delta}$ that can be correctly labelled by CLF, following the developments in Section~\ref{sec:preliminaries}. 
Since records in $\mathcal{R}_P^{I_\Delta}$ are randomly selected from $\mathcal{R}_P$, the reward $r$ of $P$ can be viewed as a random variable, which is assumed to follow a distribution
$\pr(r|\theta_P,P)$, where $\theta_P$ refers to the set of parameters of this distribution. In each interaction, the consumer randomly draws a value $r_P$  from $\pr(r|\theta_P,P)$ for each $P\in\mathcal{P}$, selects the predicate $P^*=\arg\max_{P\in\mathcal{P}}r_P$ and issues query ($P^*,I_\Delta$). 

After obtaining the records for query $(P,I_\Delta)$, we update the distribution $\pr(r|\theta_P, P)$, by updating the values of $\theta_P$ based on the records received. The distribution $\pr(r|\theta_P, P)$ before and after the update are  the {\em prior distribution} and {\em posterior distribution} respectively. We choose to use the Beta distribution \cite{johnson1995continuous} to model the prior distribution, which has been shown to be effective in a variety of settings \cite{russo2017tutorial}. The Beta distribution is characterized by two parameters $\alpha$ and $\beta$, and it is a particularly good fit for our problem as $\alpha$ and $\beta$ represent the pseudo counts of the number of correct and incorrect classifications we believe the CLF can make, providing our initial perspective of the reward function of $P$. 
Moreover, as shown in Section \ref{sec:bounding_error}, the reward of $(P,I_\Delta)$, i.e., the number of correctly classified records in $\mathcal{R}_P^{I_\Delta}$, follows a Binomial distribution.  It is known that the conjugate of Binomial distribution is the Beta distribution \cite{diaconis1979conjugate}, and the posterior distribution will still be a Beta distribution, making parameter update highly tractable. 
We next show how to compute a Beta posterior from a Beta prior and $\mathcal{R}_P^{I_\Delta}$.

Let Beta($\alpha$, $\beta$) be the Beta distribution with two parameters $\alpha$ and $\beta$. In our case, we initialize both $\alpha$ and $\beta$ to 1 for all predicates, essentially making the Beta distribution a uniform distribution, in line with the fact that the consumer has no knowledge regarding the rewards of predicates at the beginning. Suppose that the reward distribution for $P$ is Beta($\alpha_P,\beta_P$) before a round of interaction, and $\mathcal{R}_P^{I_\Delta}$ is received in this round. Let $N_P^{I_\Delta}$ be the number of records correctly labelled by CLF, i.e.,

\begin{equation}
    N_P^{I_\Delta}=\sum_{(x_i,y_i)\in\mathcal{R}_P^{I_\Delta}}\mathbb{I}[\mbox{CLF}((x_i,y_i))=0]
\end{equation}

We can show that $\alpha_P$ and $\beta_P$ can be updated as follows to obtain the posterior distribution conditional on $N_P^{I_\Delta}$ and $I_\Delta$:

\begin{equation}
\label{eq:post}
    (\alpha_P,\beta_P)\leftarrow (\alpha_P,\beta_P)+(N_P^{I_\Delta},I_\Delta-N_P^{I_\Delta})
\end{equation}

It follows immediately from Equation (\ref{eq:post}) that (1) the expectation of distribution Beta($\alpha_P,\beta_P$), computed as $\frac{\alpha_P}{\alpha_P+\beta_P}$, is proportional to the reward of $P$ (notice that $\frac{\alpha_P}{\alpha_P+\beta_P}$ is essentially the percentage of records satisfying $P$ that can be correctly labelled by CLF), so that the probabilities of selecting predicates with high observed rewards in future interactions are higher; and (2) after the update, $(\alpha_P+\beta_P+2)$ is equal to the number of records obtained using $P$ so far (as both $\alpha_P$ and $\beta_P$ are initialized to 1). With more records acquired using $P$, ($\alpha_P+\beta_P$) becomes larger and the distribution of Beta($\alpha_P,\beta_P$) becomes more concentrated, meaning that we are more confident regarding the expectation of $P$'s reward. Note that this is also the reason why we use the number of correctly labelled records as the reward rather than percentage thereof: even both queries ($P,I_\Delta=100$) and  ($P,I_\Delta=10$) return records of which 80\% can be correctly labelled, the former should give us more confidence regarding the distribution of $P$'s reward and thus ($\alpha_P,\beta_P$) should be greater in this case.

\subsection{Non-stationary Reward Distributions}
\label{sec:nonstationary}
For our discussion in Section~\ref{sec:spsts}, we have assumed that the reward distribution is  stationary regardless of the number of records acquired in previous rounds. However, one can observe that as  $\hat{\mathcal{R}}_P$ (the records the consumer has that satisfy predicate $P$) grows, the new information brought by each additional record from $\mathcal{D}_{pool} $ satisfying $P$ decreases, and consequently the reward of $P$ decreases. 
As such, not all past rewards observed should be treated equally. We should focus on the rewards observed from recent rounds, which better reflect the current reward distributions. Therefore, we modify the posterior computation in Equation (\ref{eq:post}) in a way that remembers only the rewards observed from the most recent $\tau$ rounds of interactions, as inspired by previous research on dealing with non-stationary reward distributions (e.g., \cite{russo2017tutorial,ikonomovska2015real}). 

More specifically, assume that the consumer has interacted with the provider using $P$ for $t$ rounds (including the current round), with rewards $N_P^{I_\Delta}[1],N_P^{I_\Delta}[2],\cdots,N_P^{I_\Delta}[t]$, then $\alpha_P$ and $\beta_P$ are updated as follows in two steps:

\begin{equation}
\label{eq:update_recent}
\begin{split}
    (\alpha_P,\beta_P)&\leftarrow(\alpha_P,\beta_P)+(N_P^{I_\Delta}[t],I_\Delta-N_P^{I_\Delta}[t]);\\
    (\alpha_P,\beta_P)&\leftarrow(\alpha_P,\beta_P)-(N_P^{I_\Delta}[t-\tau],I_\Delta-N_P^{I_\Delta}[t-\tau]), \text{ only if } t>\tau
\end{split}
\end{equation}

By remembering only the most recent rewards, the expectation of Beta($\alpha_P,\beta_P$) is closer to the current reward of predicate $P$. Besides, ''forgetting'' the previous rewards prevents Beta($\alpha_P,\beta_P$) from becoming too concentrated (recall that ($\alpha+\beta$) influences how concentrated the distribution is), and thus always allows a chance for more exploration, suitable for the setting with changing rewards.
\subsection{The SPS Algorithm}
The operation of SPS is summarized in Algorithm \ref{alg:sqs}. We initialize all reward distributions to Beta($1,1$) (Line 1). At each interaction, we randomly sample a value $r_P$ from distribution Beta$(\alpha_P,\beta_P)$ for each $P$ (Lines 3-4), and select the predicate $P^*$ with the highest $r_P$ and issue query $(P^*,I_\Delta)$ (Lines 5-6). After receiving a set of records, $\mathcal{R}_{P^*}^{I_\Delta}$, we update the values of $\alpha_P$ and $\beta_P$ accordingly (Lines 7-8), merge $\mathcal{R}_{P^*}^{I_\Delta}$ into acquired records and deduct $I_\Delta$ from the remaining budget (Line 9). The process terminates when the budget is exhausted.
\begin{algorithm}[htb]
\caption{Sequential Predicate Selection}
\label{alg:sqs}
\begin{algorithmic}[1]
\Require
budget $B$, all predicates $\mathcal{P}$
\Ensure
A set of records $\mathcal{R}$
\State \textbf{Initialization:} $\forall P\in\mathcal{P}$, $\alpha_P=1$, $\beta_P=1$; $\mathcal{R}=\emptyset$
\While {$B>0$}
\For {$P$ in $\mathcal{P}$}
\State sample $r_P$ from distribution Beta($\alpha_P,\beta_P$);
\EndFor
\State $P^*=\arg\max_{P\in\mathcal{P}}r_P$;
\State ask query ($P^*,I_\Delta$) and receive records $\mathcal{R}_{P^*}^{I_\Delta}$;
\State compute $N_{P^*}^{I_\Delta}=\sum_{(x_i,y_i)\in\mathcal{R}_{P^*}^{I_\Delta}}\mathbb{I}[\mbox{CLF}((x_i,y_i)=0)]$;
\State update $(\alpha_{P^*},\beta_{P^*})$ with Equation (\ref{eq:update_recent});
\State $\mathcal{R}=\mathcal{R}\cup\mathcal{R}_{P^*}^{I_\Delta}$; $B=B-I_\Delta$;
\EndWhile
\State \Return $\mathcal{R}$;
\end{algorithmic}
\end{algorithm}
\section{Experiments}
\label{sec:exp}
The techniques we propose are equally applicable to traditional Machine Learning  \cite{DBLP:conf/sigmod/ChenK019} and Deep Learning \cite{DBLP:conf/cvpr/YanAF20} models across a variety of domains. We choose models and datasets in the experiments to reflect the wide range of applications we envision. More specifically, we choose state-of-the-art deep models as well as classical ML models to experiment with. The datasets used in the experiments include image data, spatial data, and optical-radar data, and the tasks range from classification to regression.

\subsection{Settings}
\label{exp:settings}
{\bf Datasets.} We conduct experiments on four datasets. CIFAR10 and CIFAR100 \cite{krizhevsky2009learning} are image classification datasets widely used in the area of ML. {The Crop mapping dataset \cite{khosravi2019random} (Crop for short) contains temporal, spectral, textural, and polarimetric attributes for cropland classification. It has 175 real-valued features and one target (seven crop types).} 3D Road Network \cite{kaul2013building} (RoadNet for short) is a geographical dataset consisting of tuples of longitude, latitude, and altitude. {Following the instruction in \cite{kaul2013building}, we use longitude and latitude as the features and altitude as the predicted value.} The latter two datasets can also be found in the UCI data repository \cite{Dua:2019}. The characteristics of the four datasets are summarized in Table \ref{tab:dataset}.

CIFAR10, CIFAR100, and Crop are used for classification tasks, while RoadNet is used for a regression task. To generate $\mathcal{D}_{test}$, we directly use the test sets provided by CIFAR10 and CIFAR100, and randomly select $20\%$ records from Crop and RoadNet.
\begin{table}[h]
\caption{Dataset Characteristics}
\begin{tabular}{|c|c|c|c|}
\hline
\textit{dataset}  & \textit{$\#$ records} & \textit{$\#$ classes} & $\#$ \textit{dimensions} \\ 
\hhline{|=|=|=|=|}
CIFAR10  & 60,000       & 10           & 1,024            \\ \hline
CIFAR100 & 60,000       & 100          & 1,024           \\ \hline
Crop     & 325,834       & 7            & 175             \\ \hline
RoadNet  & 434,874      & N/A          & 2              \\ \hline
\end{tabular}
\label{tab:dataset}
\end{table}

{\bf Models.} For classification on CIFAR10 and CIFAR100, we adopt VGG8B with predsim loss function \cite{DBLP:conf/icml/NoklandE19}, one of the most recent and state-of-the-art deep learning structures. For classification on Crop, we use Decision Tree. For RoadNet, we use $k$NN Regressor \cite{altman1992introduction}. We also utilized other applicable models for our evaluation (e.g., AlexNet \cite{DBLP:conf/nips/KrizhevskySH12}, EfficientNet \cite{DBLP:conf/icml/TanL19}) and observed similar trends. We use the code of VGG8B provided by the authors, and the Decision Tree and $k$NN Regressor implementation in scikit-learn. Default settings are adopted for all models. 

{\bf Construction of $\mathcal{P}$.} For CIFAR10, CIFAR100, and Crop, we build predicates based on class labels, resulting in 10 predicates for CIFAR10, 100 predicates for CIFAR100, and 7 predicates for Crop. {For RoadNet, by default we discretize the data space by partitioning the range of each feature into four equal-width sub-ranges, resulting in $4^2=16$ cells; a predicate selects records falling into a specific cell. We study the influence of $|\mathcal{P}|$ in Section \ref{exp:partition}}.

{\bf Construction of $\mathcal{D}_{init}$.} We construct $\mathcal{D}_{init}$ using $20\%$ records in the corresponding dataset $\mathcal{D}$ for CIFAR10 and CIFAR100, and $1\%$ records for Crop and RoadNet. { We select records from $\mathcal{R}_P$ for each $P\in\mathcal{P}$ to construct $\mathcal{D}_{init}$ following a power-law distribution. More specifically, with a random order of predicates in $\mathcal{P}$, let $P_i$ be the $i$-th predicate ($i\in[1,\vert\mathcal{P}\vert]$), the number of records selected from $\mathcal{R}_{P_i}$ is proportional to $i$.}

{\bf Selection of CLF.} {We use the $k$NN classifier with $k=1$ as the CLF in our experiments. We study the impact of varying $k$ values as well as utilizing diverse classifiers in Section \ref{sec:exp-clf}.} For Crop and RoadNet, we directly use the raw attributes as the input of CLF. In accordance to previous work on image-based $k$NN search \cite{wang2014learning}, for CIFAR10 and CIFAR100, we first use HOG \cite{dalal2005histograms} (Histogram of Oriented Gradients), a widely-adopted image feature extractor, to transform an image into a feature vector, and use the transformed feature vectors as the input of CLF.

{\bf Evaluation.} Let $\mathcal{D}_{otd}$ be the records acquired during the acquisition process. We train the model on $\mathcal{D}_{init}\cup\mathcal{D}_{otd}$ and evaluate on $\mathcal{D}_{test}$. For the classification tasks on CIFAR10, CIFAR100, and Crop, accuracy is computed as follows:

\begin{equation}
    accuracy=\frac{\sum_{(x,y)\in\mathcal{D}_{test}}\mathbb{I}[\mathcal{M}(x)=y]}{|\mathcal{D}_{test}|}
\end{equation}

where $\mathcal{M}(x)$ denotes the model output on $x$.

For the regression task on RoadNet, we use $R^2$ score to evaluate the performance, computed as follows:

\begin{equation}
    R^2=1-\frac{\sum_{(x,y)\in\mathcal{D}_{test}}(y-\mathcal{M}(x))^2}{\sum_{(x,y)\in\mathcal{D}_{test}}(y-\bar{y})^2}
\end{equation}

where $\bar{y}=\frac{\sum_{(x,y)\in\mathcal{D}_{test}}y}{|\mathcal{D}_{test}|}$.

The acquisition process and model training are repeated ten times and the average accuracy/$R^2$ score is reported.

\subsection{The Effect of $l$ on EA}
\label{exp:l}
As described in Section \ref{sec:ea}, EA requires acquiring $l\%$ random records for each predicate from the provider for initialization. Here, we experimentally evaluate the effect of $l$ on the performance of EA. To have a common basis for the evaluation of the trends we fix the significance level ($\delta$) at 0.001 throughout the experiment. The results are provided in Figure \ref{fig:ea-init}.
\begin{figure}[h]
    \centering
    \includegraphics[width=3.2in]{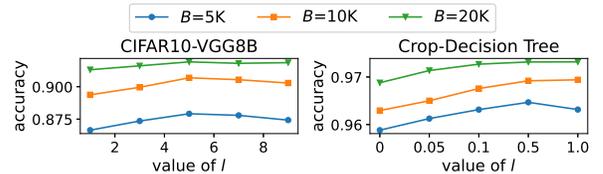}
    \caption{Effect of $l$ on EA}
    \label{fig:ea-init}
\end{figure}

As demonstrated in Figure \ref{fig:ea-init}, the performance of EA with a small budget is more sensitive to the choice of $l$, while $l$ has a less significant impact on EA's performance when a large budget is used, except for cases where $l$ is very small. The trade-off involved in selecting $l$ can be summarized as follows. Using an overly-large $l$ would result in too much budget consumption for the initialization, and consequently reduce the budget available for the Allocation Stage of EA, especially when the total budget $B$ is small. On the other hand, using too small an $l$ may cause the Estimation Stage to perform badly, leading to low-quality predicate utility estimates, as discussed in Section \ref{sec:ea_alg}. In the following experiments, we set $l=5$ for CIFAR10 and CIFAR100, and $l=0.5$ for RoadNet and Crop. Since the budgets we use are fairly large for the respective dataset, the performance of EA is less dependent on the choice of $l$.

{{\bf Takeaways.} (1) An overly small or large $l$ may harm the performance of EA. The value $l=5$ for image data, and $l=0.5$ for spatial data worked best during our experiments; (2) The performance of EA becomes less dependent on the choice of $l$ as budget increases.}

{
\subsection{Estimation Accuracy of EA}
In the Estimation Stage of EA, we assess the utility of predicate $P$, $U_P$, based on the records acquired with $P$, $\hat{\mathcal{R}}_P$. Let $\hat{U}_P$ be the estimated value of $U_P$. In this section, we examine the estimation accuracy. More specifically, we vary the number of records used for utility estimation, {normalized by $|{\mathcal{R}}_{P}|$, i.e., ${|\hat{\mathcal{R}}_P|}/{|{\mathcal{R}}_{P}|}$}, to study its effect on the absolute error of the estimation, i.e., $|U_P-\hat{U}_P|$. The results are presented in Figure \ref{fig:ea-estimation-accuracy}.

\begin{figure}[h]
    \centering
    \includegraphics[width=2in]{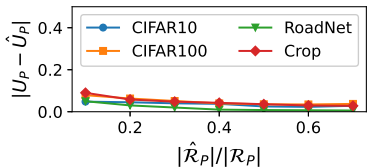}
    \caption{Estimation Accuracy}
    \label{fig:ea-estimation-accuracy}
\end{figure}

As can be observed from Figure \ref{fig:ea-estimation-accuracy}, the absolute error of the estimation is consistently low (below 0.1), and increasing the number of records used for estimation further reduces the absolute error. The reason is that, according to Theorem \ref{theorem:normal}, $U_P-\hat{U}_P$ follows a normal distribution, and increasing $|\hat{\mathcal{R}}_P|$ reduces the standard deviation of the distribution; consequently, the value of $\hat{U}_P$ {converges to} the true utility, $U_P$.
}

\subsection{Comparison of Allocation Strategies in EA}
\label{exp:allocation}
Two strategies have been proposed in Section \ref{sec:allocation}, namely, Linear Allocation and Square-root Allocation, which allocate the budget proportional to the estimated utility of each predicate or its square-root respectively. In this section we evaluate the impact of the allocation strategy on the performance of EA, and present the results in Figure \ref{fig:allocation}. { We also conduct 
one-tailed t-tests to determine whether the average accuracy improvement of one allocation strategy is statistically higher than the other. The cases for $B=5,000$ and $B=20,000$ are provided in Table \ref{table:sig-test-alloc}, where $\mu_L$ ($\mu_S$) denotes the average accuracy improvement of Linear (Square-root) Allocation.}

\noindent\begin{minipage}{.26\textwidth}
\centering
\includegraphics[width=\textwidth]{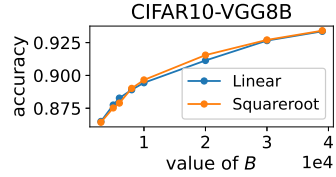}
\captionof{figure}{Allocation Strategies}
\label{fig:allocation}         
\end{minipage}
\noindent\begin{minipage}{.22\textwidth}
	\captionof{table}{Significance Test}
	\label{table:sig-test-alloc}
	\centering
	\begin{tabular}{ccc}
		\toprule
		$B$      &$H_A$    & p-value \\
		\midrule
		5,000    & $\mu_L>\mu_S$ & 4e-3   \\
		20,000  & $\mu_S>\mu_L$ & 1e-7   \\
		\bottomrule
	\end{tabular}
\end{minipage}

{As is clear from Figure \ref{fig:allocation} and Table \ref{table:sig-test-alloc}, different allocation strategies achieve similar performance overall, and with significance level $\alpha<0.01$ we say Linear Allocation yields higher accuracy when the budget is relatively small (e.g., $B=5,000$) and Square-root Allocation takes the lead when the budget is large (e.g., $B=20,000$).} This observation is the result of two competing underlying factors: on one hand, we should exploit the knowledge on the predicate utility gained from the Estimation Stage and therefore we should bias the allocation towards predicates with higher estimated utilities as much as possible (hence the superiority of Linear Allocation over Square-root Allocation for small budgets); on the other hand, as discussed in Section \ref{sec:sps}, the utility of a predicate $P$ decreases as more records are acquired with $P$, i.e., the marginal benefit of obtaining records using predicates with higher estimated utilities becomes lower as the budget grows (hence Square-root Allocation outperforms Linear Allocation for large budgets). In other experiments we adopt Linear Allocation.

{{\bf Takeaways.} (1) Linear Allocation is better suited for data acquisition with budget $B\le 0.2|\mathcal{D}|$ in our experiments; (2) Square-root Allocation  is better suited for data acquisition with budget $B> 0.2|\mathcal{D}|$ during our evaluation.}

\subsection{The Effect of Batch Size on SPS}
\label{exp:batch-size}
With SPS, records are acquired in small batches of size $I_\Delta$. We evaluate the effect of $I_\Delta$ on the performance of SPS presenting our results in Figure \ref{fig:batch_size}.
\begin{figure}[h]
    \centering
    \includegraphics[width=3.2in]{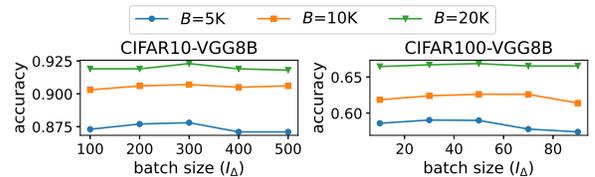}
    \caption{Effect of Batch Size on SPS}
    \label{fig:batch_size}
\end{figure}

As can be observed from Figure \ref{fig:batch_size}, data acquisition with a small budget is more sensitive to the value of $I_\Delta$. The trade-off in selecting $I_\Delta$ is as follows. Let $(P,I_\Delta)$ be the query. With a small $I_\Delta$, the records returned by the provider may not be representative of $\mathcal{R}_P$, and the query reward thus computed is inaccurate. Consequently SPS cannot identify those predicates with high rewards. On the other hand, if $I_\Delta$ is too large, then each interaction consumes too much budget, limiting the exploitation of predicates with higher rewards.  However, as the budget increases, SPS becomes progressively less sensitive to the batch size, because (1) the variations in the records obtained in each interaction have minimal impact on the overall estimation accuracy given a large number of interactions; (2) although exploration with large $I_\Delta$ consumes more budget, it also provides more information regarding the reward distribution (see Equation (\ref{eq:post})), benefiting future predicate selection. In the following experiments, we select $I_\Delta=300$ for CIFAR10, and $I_\Delta=30$ for CIFAR100, Crop and RoadNet. Since the budgets we use are fairly large for the respective datasets, we expect less dependency on the choices of $I_\Delta$.

{{\bf Takeaways.} (1) The accuracy of SPS is relatively stable {over a wide range of batch size values, only slightly decreasing for an overly small or large batch size, depending on the ML/DL model and data}; (2) The accuracy of SPS becomes less dependent on the batch size as budget increases.}
\subsection{The Effect of $\tau$ on SPS}
As discussed in Section \ref{sec:nonstationary}, to deal with the non-stationary reward, we only use the records acquired by the most recent $\tau$ queries with $P$ to update the posterior distribution of $P$'s reward. We evaluate the effect of $\tau$ on the performance of SPS, and report the results on CIFAR10 in Figure \ref{fig:tau}; similar trends can be observed on other datasets. 
\begin{figure}[h]
    \centering
    \includegraphics[width=2.5in]{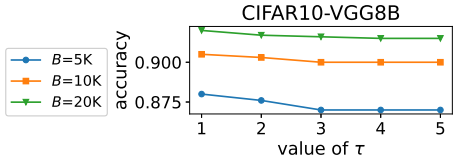}
    \caption{Effect of $\tau$ on SPS}
    \label{fig:tau}
\end{figure}

As can be observed from Figure \ref{fig:tau}, the performance of SPS is relatively stable across different values of $\tau$, with small $\tau$ yielding slightly higher accuracy. { The reason is that, as indicated by Equation (\ref{eq:update_recent}), with larger $\tau$, $\alpha_P$ and $\beta_P$ contain more dated reward observations and thus diverge from the current reward distribution; this in turn may cause the acquisition of less useful records (in terms of accuracy improvement). Having less useful records clearly has a stronger influence on model accuracy when the total number of training records is smaller (corresponding to a smaller budget). However, with the SPS strategy, as long as the predicate with the actual maximal expected reward has a higher probability to be selected than the other predicates, the cumulative reward is likely to be maximized. Therefore, SPS is tolerant to inaccurate reward distribution estimations and robust to the value of $\tau$.} We set $\tau$ to 1 in the following experiments.

{ {\bf Takeaways.} (1) Setting $\tau=1$ always leads to higher accuracy during our evaluation; (2) SPS is relatively robust with respect to $\tau$.}
{
\subsection{The Effect of $|\mathcal{P}|$}
\label{exp:partition}

In this section we study the influence of the number of predicates, i.e., $|\mathcal{P}|$, on the accuracy of the methods. More specifically, we change $|\mathcal{P}|$ by changing either: (1) the number of labels to construct $\mathcal{P}$ for a classification dataset, or (2) the discretization granularity for a regression dataset. For case (1), we select CIFAR100 and use the $m$ labels with which the model has the lowest accuracy (cross validated on $\mathcal{D}_{init}$) to construct $\mathcal{P}$. For case (2), we use RoadNet and partition the range of each of the two features into $n$ equal-width sub-ranges, resulting in $n^2$ cells, which are used as $\mathcal{P}$. The results are presented in Figure \ref{fig:partition}.

\begin{figure}[h]
    \centering
    \includegraphics[width=3.2in]{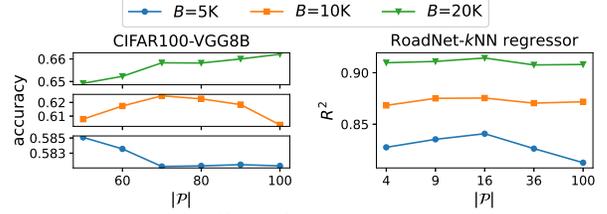}
    \caption{Effect of Space Discretization}
    \label{fig:partition}
\end{figure}

As is clear from the results on CIFAR100, for case (1), the trend of accuracy with respect to $|\mathcal{P}|$ depends on the budget. The $|\mathcal{P}|$ yielding the maximal accuracy is 50 for $B$=5K, 70 for $B$=10K, and 100 for $B$=20K. {The trend confirms our intuition that with a small budget, limiting the data acquisition to the predicates where the model is more error-prone reduces extra exploration cost and consequently increases the accuracy; however, this may result in over-exploitation of these predicates when the budget is large due to decreasing utility (as discussed in Section \ref{sec:nonstationary}), leading to a slower increase in accuracy compared to larger $\mathcal{P}$.} As can be observed from the results on RoadNet, for case (2), the $R^2$ score is relatively stable with respect to $|\mathcal{P}|$, with a slight increase when $|\mathcal{P}|$ is between 9 and 36. 
The reason is that, an overly coarse partitioning granularity (small $|\mathcal{P}|$) prevents the effective identification of the area in the data space where the model has a low $R^2$ score, while an overly fine partitioning granularity (large $|\mathcal{P}|$) increases the exploration cost as there are more predicates whose utilities need to be estimated. 

{\bf Takeaways.} During our experiments, (1) Descritizing the data space such that each cell occupies $3\%-10\%$ of the size of the entire data space gives higher $R^2$ score; (2) For classification tasks with budget $B\le 0.2|\mathcal{D}|$, using no more than $70\%$ of all labels to construct $\mathcal{P}$ gives higher accuracy; with $B> 0.2|\mathcal{D}|$, using at least $70\%$ of all labels to construct $\mathcal{P}$ leads to higher accuracy;  (3) The performance of the proposed methods is generally stable in terms of the number of predicates.}




\subsection{EA vs. SPS}
We now experimentally compare the two methods proposed in the paper, EA and SPS, and showcase the relative trends. We report the results in Figure \ref{fig:ea-sps}. { We also conduct one-sided t-tests to determine whether the average accuracy improvement of one method is statistically higher than the other. The cases for $B=3,000$ and $B=20,000$ are provided in Table \ref{table:sig-test-ea-sps}, where $\mu_{P}$ ($\mu_{E}$) denotes the average accuracy improvement of SPS (EA).}

\noindent\begin{minipage}{.26\textwidth}
\centering
\includegraphics[width=\textwidth]{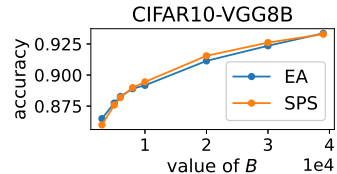}
\captionof{figure}{EA vs. SPS}
\label{fig:ea-sps}        
\end{minipage}
\noindent\begin{minipage}{.22\textwidth}
	\captionof{table}{Significance Test}
	\label{table:sig-test-ea-sps}
	\centering
	\begin{tabular}{ccc}
		\toprule
		$B$      &$H_A$    & p-value \\
		\midrule
		3,000    & $\mu_{E}>\mu_{P}$ & 1e-4   \\
		20,000  & $\mu_{P}>\mu_{E}$& 1e-9   \\
		\bottomrule
	\end{tabular}
\end{minipage}\hfill

{As can be observed from Figure \ref{fig:ea-sps} and Table \ref{table:sig-test-ea-sps}, EA and SPS provide similar performance, and with significance level $\alpha<0.001$ we say EA provides a higher accuracy gain with a small budget (e.g., $B=3,000$) and SPS provides a higher accuracy improvement with a large budget (e.g., $B=20,000$).} The reason is that EA is a budget-aware method: with a small budget it tends to allocate less budget for utility estimation, aided by the heuristic reward, so that more budget can be allocated to predicates with high estimated utilities. The mechanism of SPS, on the other hand, is budget-agnostic, and the exploration of all predicate rewards at the start consumes budget and limits the chances to exploit predicates with high utilities, especially when the budget is small. As the budget increases, SPS starts to outperform EA since it acquires records in small batches and can flexibly adjust the acquisition strategy in face of utility changes; EA conducts one-time allocation without considering future utility changes and the records thus obtained may not be impactful to improve model accuracy.

{{\bf Takeaways.} In our evaluation, (1) EA is better suited for data acquisition with budget $B\le 0.2|\mathcal{D}|$; (2) SPS is better suited for data acquisition with budget $B> 0.2|\mathcal{D}|$.}

{
\subsection{The Effect of CLF}
\label{sec:exp-clf}
As discussed in Section \ref{sec:preliminaries}, the utility of a predicate is essentially the accuracy of a classifier (CLF) in differentiating $\mathcal{R}_{P}$ and $\mathcal{R}_{\mathcal{M}:P}$. Here we experimentally study the influence of CLF on the accuracy improvement. More specifically, since the utility computation is carried out frequently, we consider lightweight models including the $k$NN classifier ($k\in\{1,3,5\}$), the decision tree classifier, and the perceptron classifier. We report results on CIFAR10 in Figure \ref{fig:clf}; similar trends can be observed on other datasets. 
\begin{figure}[h]
    \centering
    \includegraphics[width=3.2in]{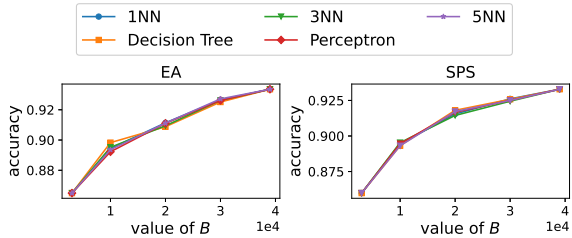}
    \caption{Effect of CLF}
    \label{fig:clf}
\end{figure}

As can be observed from Figure \ref{fig:clf}, the performance of the methods is not sensitive to the particular model used as the CLF and its hyper-parameters. In other experiments, we use the $k$NN classifier with $k=1$.
}

\subsection{Comparison with Baseline Methods for Data Acquisition}

The closest piece of work that can be adapted to our setting for comparison purposes is the work on active class selection (ACS) \cite{Lomasky2007ECML}. Although the problem solved therein is different, we can adapt the methods used for our setting. Therefore, we use ACS (adapted to our setting) as the baseline and present experimental results comparing it with our proposals. We note, however, that this is not a fair comparison, because ACS requires re-training the model after each interaction which can be computationally prohibitive for complex models involving large datasets, whereas ours does not. 

Specifically, ACS acquires $b$ new data records in each round (the same batch size as used in Section \ref{exp:batch-size}), which is allocated to each class uniformly (ACS-Uniform), or in proportion with the accuracy improvement for each class (ACS-AI), or the number of records in each class whose label has changed (ACS-RD) during the last round. Note that ACS-AI and ACS-RD require model re-training after new records are obtained and thus are too expensive to be applied to CIFAR10 and CIFAR100. As such, we apply ACS-Uniform to CIFAR10, and ACS-AI and ACS-RD to Crop, with results provided in Figure \ref{fig:baselines}. The observations on other datasets are similar and are thus omitted.


\begin{figure}[h]
    \centering
    \includegraphics[width=3.2in]{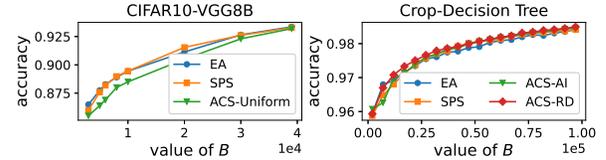}
    \caption{Comparison with Baselines}
    \label{fig:baselines}
\end{figure}

The results in Figure \ref{fig:baselines} (a) indicate that EA and SPS consistently outperform ACS-Uniform (except for the case of $B=4,000$ when all samples are acquired) and achieve similar accuracy to baselines {\em that require the model to be re-trained after each interaction}, by effectively acquiring records with higher novelty that are more likely to boost accuracy.

{
\subsection{Comparison with Utility Measures Based on Re-training/refinement}
\label{exp:measure}
One of the main advantages of the utility measure we propose, novelty, is that it does not require the computationally expensive step of model re-training. In this section, we compare novelty with re-training-based utility measures in terms of model accuracy improvement and data acquisition cost. More specifically, we compare with a re-training-based measure where $\mathcal{M}$ is re-trained on newly-acquired records, say $\mathcal{R}_P^I$,  and the improvement in accuracy after re-training is used as the utility of $P$. While all lightweight models are re-trained from scratch, we adopt the state-of-the-art incremental learning method, UCB \cite{DBLP:conf/iclr/EbrahimiEDR20}, to refine deep models instead of conducting complete re-training to keep training overhead manageable. We report the results of using SPS together with both utility measures on CIFAR10 and Crop in Figure \ref{fig:retraining}; observations are similar on other datasets and consistent in the case of using EA.

\begin{figure}[h]
    \centering
    \includegraphics[width=3.2in]{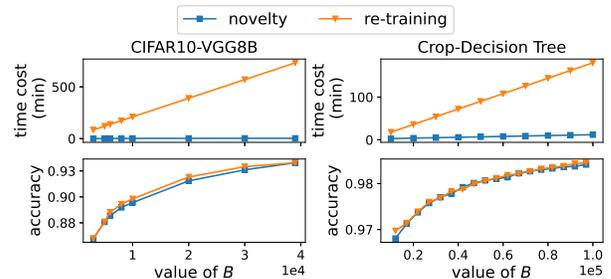}
    \caption{Comparison with Re-training-based Measure}
    \label{fig:retraining}
\end{figure}

The results in Figure \ref{fig:retraining} indicate that novelty achieves similar accuracy to the re-training-based utility measure by effectively acquiring records with higher novelty that are more likely to boost accuracy, while requiring orders of magnitude less time. Although the model can be refined incrementally with UCB, repetitive model refinement, especially when the budget is large, still results in high execution overhead. Although lightweight models such as decision trees can be constructed rapidly, repetitive construction imposes large computational overheads during the acquisition process. Novelty, on the contrary, conducts data acquisition by only looking at the data, and thus is highly efficient for practical deployments.
}
\section{Conclusions and Future Work}

In this paper, we have considered the problem of acquiring data in order to improve the accuracy of ML models, and laid out the framework of interaction between a provider and a consumer in the context of data markets. We have proposed two algorithmic solutions that the consumer with a limited budget could use to obtain data from the provider, striking a balance between exploration (gaining more knowledge on the data the provider possesses) and exploitation (utilizing that knowledge for allocating the limited budget for data acquisition). The first solution, EA, has two distinctive stages, Estimation Stage and Allocation Stage, focusing on exploration (obtaining accurate estimates on the predicate utilities) and exploitation (allocating the budget according to the estimates) respectively. The second solution, SPS, blends exploration and exploitation in each round of interaction, and adaptively allocates budget for the next round, investing resources into more promising areas of the data space to improve model accuracy. Results from our experimental studies have confirmed the effectiveness of our proposals, and illustrated the trade-offs and relative strengths of each proposed solution. 

Our work represents the first step in dealing with data acquisition in a market setting, and research opportunities in this new area abound. For example, one could consider the problem of acquiring data from multiple providers with varying data coverage and data quality, or investigate new mechanisms for data acquisition where there is a third party (e.g., data broker) involved. 
\appendix
\section{Notation table}
\label{sec:notation}
The notations and abbreviations are provided in Table \ref{tab:notation}.
\begin{table*}[h]

\caption{Notations and Abbreviations} \label{tab:notation} 
\begin{center}
\begin{tabular}{|c| c|}
\hline
{\bf Notation/Abbreviation} & {\bf Definition}\\

\hline
$\Gamma=\{\mathcal{X},\mathcal{Y}\}$ & the data domain, where $\mathcal{X}$ denotes the feature space and $\mathcal{Y}$ denotes the label space\\
\hline
$\mathcal{M}$ & the Machine Learning model\\
\hline
$\mathcal{D}_{pool}$ & the data the provider maintains\\
\hline
$\mathcal{D}_{init}$ & the data the consumer initially possesses\\
\hline
$\mathcal{D}_{test}$ & the test data on which the performance of $\mathcal{M}$ is evaluated\\
\hline
$P$ ($\mathcal{P}$) & an arbitrary predicate (all predicates)\\
\hline
$\mathcal{R}_P$ & all records in $\mathcal{D}_{pool}$ satisfying $P$\\
\hline
$Q=(P,I)$ & a query\\
\hline
$\mathcal{R}_Q$ & records returned to the consumer in response to query $Q$\\
\hline
$U_P$ & the utility (novelty) of predicate $P$\\
\hline
CLF & the classifier used in computing $U_P$\\
\hline
EA & the Estimation-and-Allocation strategy\\
\hline
$l$ & the percentage of records acquired from $\mathcal{R}_P$ during the initialization stage of EA\\
\hline
SPS & the Sequential Predicate Selection strategy\\
\hline
$\tau$ & the number of rounds to remember in SPS\\
\hline
\end{tabular}
\end{center}
\end{table*}
\bibliographystyle{ACM-Reference-Format}
\bibliography{reference}
\end{document}